\newcommand{\beq}{\begin{equation}}
\newcommand{\eeq}{\end{equation}}
\newcommand{\bsp}{\begin{split}}
\newcommand{\esp}{\end{split}}
\newcommand{\bg}{\begin{gather}}
\newcommand{\eg}{\end{gather}}
\newcommand{\RR}{\mathbb R}
\newcommand{\A}{\mathcal{A}}
\newcommand{\D}{\mathcal{D}}
\newcommand{\T}{\mathcal{T}} 
\newcommand{\W}{\mathcal{W}}
\newcommand{\supp}{{\rm supp\>}}
\renewcommand{\labelenumi}{(\roman{enumi})}
\newtheorem{prop}{Proposition}
\newtheorem{thm}[prop]{Theorem}
\begin{document}
\title{The time slice axiom in perturbative quantum field theory on globally hyperbolic spacetimes}
\author{
Bruno Chilian{\footnote{e-mail: {\tt \small bruno.chilian@desy.de}
} } and Klaus Fredenhagen\footnote{e-mail: {\tt \small klaus.fredenhagen@desy.de}}  \\[2mm]
II. Institut f\"ur Theoretische Physik\\
Universit\"at Hamburg\\
D-22761 Hamburg, Germany\\
}

\maketitle
\begin{abstract}
The time slice axiom states that the observables which can be measured within an arbitrarily small time interval suffice to predict all other observables. While well known for free field theories where the validity of the time slice axiom is an immediate consequence of the field equation it was not known whether it also holds in generic interacting theories, the only exception being certain superrenormalizable models in 2 dimensions. In this paper we prove that the time slice axiom holds at least for scalar field theories within formal renormalized perturbation theory.

\end{abstract}

\section{Introduction}\setcounter{equation}{0}
It is an important feature of hyperbolic differential equations that they typically admit a well posed initial value problem. From the point of view of physics, this allows to predict the future from the present and has an enormous impact, both philosophically and in view of technical applications. It is therefore an important question whether this property remains valid in quantum field theory. In non-interacting theories the field equations are linear and the quantum fields obey the same equations as the classical fields which allows the same conclusions. Note, however, that a formal proof was given not earlier as in \cite{FullingNarcowichWald}. Note also that the proof of the well-posedness of the Cauchy problem for classical linear hyperbolic field equations on generic hyperbolic spacetimes was available up to recently only via mimeographed notes of Leray \cite{Leray} (see \cite{Baer} for a recent monograph).

In the case of quantum field theory the field equation has a somewhat unclear status due to the singular nature of quantum fields. Roughly speaking, the quantum fields are operator valued distributions and their nonlinear functions are notorically ill defined.
In renormalized perturbation theory quantum fields including composite fields can be defined as formal power series. The renormalized field equation, however, does not seem to be sufficient to answer the question of predictability.
Since the time zero fields are not always well defined even the formulation of the initial value problem seems to pose problems. The situation is better in $P(\varphi)_2$-models  \cite{GJ} and in the $\text{Yukawa}_2$-model \cite{Schrader} where the time slice axiom follows from the finite speed of propagation.

In axiomatic quantum field theory one therefore has weakened the requirement of a well posed initial value problem. Instead one requires that the algebra generated by all fields which can be measured within an arbitrarily small time slice is already the algebra of all observables. It is obvious that a precise formulation of the axiom needs a definition of an ``algebra  generated by a set of fields''. In algebraic field theory one may associate von Neumann algebras to spacetime regions with compact closure. The time slice axiom then corresponds to the axiom of primitive causality \cite{HK} which was first introduced by Haag and Schroer in \cite{HS}. In \cite{Garber}, it was shown that the diamond property is equivalent to the time slice axiom for the generalized free field.

In perturbative quantum field theory the difficulty of equipping the algebra with a suitable topology was a big obstacle for progress. But while studying the generic features of quantum field theories on curved spacetimes an algebra of functionals of classical fields was discovered \cite{DF01,HW} which carries a natural topology induced by the so-called microlocal spectrum condition \cite{Radzikowski,BFK}. This algebra, on the one hand, is an extension of the algebra of canonical commutation relations for the free field, hence the first step in  our proof is to show that the time slice axiom holds also on the larger algebra. On the other hand, the algebra contains the time ordered products of all Wick polynomials of the free field and therefore the coefficients of the formal power series defining the interacting fields. We show in the second step of our proof that the property of causal factorization of time ordered products which is the crucial starting point of causal perturbation theory \`a la Epstein-Glaser \cite{EG} implies that the time slice property remains valid for interacting field theories provided it holds true for free field theories.

Note that we prove the validity of the axiom within the framework of algebras of observables. The axiom then holds automatically in every Hilbert space representation of the theory. Since on a generic spacetime there is no distinguished representation, and since even in Minkowski space the choice of the vacuum representation of the interacting theory would require a control of the adiabatic limit, this is an enormous advantage compared to the formulation as irreducibility in the vacuum representation (i.e. triviality of the weak commutant) which was chosen in \cite{Streater-Wightman}.

Notations and conventions: Throughout this text, $M$ denotes a globally hyperbolic spacetime. We denote by $V_{\pm}$ the union $\bigcup_{x\in M}V_x^{\pm}\subset T^*M$ of all forward or backward lightcones $V_x^{\pm}$, respectively. By $J_{\pm}(N)$ we denote the causal future / past of any $N\subset M$, i.e. for any point $y\in J_ {\pm}(N)$, there exists a point $z\in N$ and a smooth, \emph{causal} (future / past directed) curve from $z$ to $y$. By $I_{\pm}$, we denote the chronological future resp. past (same definition, except that ``causal" is replaced by ``timelike"). We call a subset $P\subset M$ \emph{past compact} if $J_-(x)\cap P$ is contained in a compact set for all $x\in M$.
\section{The time slice axiom for the algebra of Wick polynomials}
\label{sect:wick}
In this section we prove the validity of the time slice axiom for the net $\mathcal{W}$ of algebras $\mathcal{W}(M)$ of Wick polynomials in the globally hyperbolic spacetimes $M$. This will set the stage for the treatment of the time slice axiom in perturbatively constructed interacting theories in section \ref{sect:interacting}, since $\mathcal{W}(M)$ also contains the time-ordered products which are used in the perturbation series. The technical preliminaries which are neccessary for this construction were given by Brunetti, Fredenhagen and K\"ohler in \cite{BFK}, and the construction of $\mathcal{W}(M)$ was carried out by D\"utsch and Fredenhagen in \cite{DF01} and Hollands and Wald in \cite{HW}.

Let $\A(M)$ be the algebra of the free field which is generated by smeared fields $\varphi(f)$, $f\in\D(M)$, which satisfy the relations
\begin{equation}
\begin{split}
&f\mapsto\varphi(f)\text{ is linear}\ , \\
&\varphi(f)^*=\varphi(\overline{f})\ , \\
&\varphi(K f) = 0\ , \\
&\left[ \varphi(f),\varphi(g) \right] = i\Delta(f,g)\ .
\end{split}
\end{equation}
Here $K=\Box -\kappa\mathbf{R} + m^2$ denotes the Klein-Gordon operator with the covariant wave operator $\Box=\nabla_{\mu}\nabla^{\mu}$ for the Levi-Civita connection $\nabla$ on $M$, the coupling $\kappa$ to the scalar curvature $\mathbf{R}$ and the mass $m$. In the following, $K_i f$, $i=1,\dots,n$ will denote the application of $K$ to an $f\in\D(M^n)$ with respect to the $i$-th argument. By $\Delta$ we denote the difference of the uniquely determined advanced and retarded fundamental solutions of $K$
.

Now, for the construction of the algebra of Wick polynomials $\W(M)$, let $\omega$ be a quasifree Hadamard state on $\A(M)$ and let $\omega_2$ be its two-point distribution. Normally ordered products are defined as the operator-valued distributions
\begin{equation}
:\varphi(x_1)\cdots\varphi(x_n):\stackrel{def}{=}\left. \frac{\delta^n}{i^n\delta f(x_1)\cdots\delta f(x_n)}
\exp\left[ \frac{1}{2}\omega_2(f\otimes f) + i\varphi(f) \right] \right|_{f=0}
\end{equation}
on the GNS Hilbert space corresponding to $\omega$. 
The algebra of Wick polynomials $\W(M)$ is now defined to consist of smeared normally ordered products,
\begin{equation}\label{eq:defSmearedWickProd}
\phi^{\otimes n}(f)=\int:\varphi(x_1)\cdots\varphi(x_n):f(x_1,\dots,x_n)dx_1\dots dx_n\ ,
\end{equation}
where the space of admissible test distributions $f$ is 
\begin{equation}\label{eq:condWF}
\T^n(M) =\left\{ f\in\D'(M^n) \text{ symm. }, \supp f\text{ compact},  WF(f)\cap\overline{V_-^n\cup V_+^n}=\emptyset \right\}\ .
\end{equation}

By Wick's theorem, the product of smeared normally ordered products is
\begin{equation}\label{prod}
\phi^{\otimes n}(f)\phi^{\otimes m}(g)=\sum_{k=0}^{\text{min}(n,m)}\phi^{\otimes(n+m-2k)}(f\otimes_k g) \ ,
\end{equation}
with the symmetrized, $k$-times contracted tensor product
\begin{equation}
\label{contrTensorProd}
\begin{split}
(f & \otimes_k g)(x_1,\dots, x_{n+m-2k})
\\& \stackrel{def}{=}\mathbf{S}\frac{n!m!}{(n-k!)(m-k)!k!}\int_{M^{2k}} dy_1\cdots dy_{2k}
\\& \omega_2(y_1,y_2)\cdots\omega_2(y_{2k-1},y_{2k})
f(x_1,\dots,x_{n-k},y_1,y_3,\dots,y_{2k-1}) \times
\\&g(x_{n-k+1},\dots,x_{n+m-2k},y_2,y_4,\dots,y_{2k}) \ ,
\end{split}
\end{equation}
where $\mathbf{S}$ denotes symmetrization in $x_1,\dots,x_{n+m-2k}$. The condition  on the wavefront set of the smearing distributions $f$ guarantees that the elements \eqref{eq:defSmearedWickProd} and the product are well defined. Moreover, the expressions \eqref{contrTensorProd} are again in $\T^{n+m-2k}(M)$ ~\cite{BFK}.

At this point, we would like to emphasize the observation made in \cite{DF01}, that the algebraic structure of this construction is independent of the particular Hadamard state $\omega$, and therefore, of the associated GNS-Hilbert space representation. In fact, \eqref{prod} may be taken as the definition of an associative product on the space $\T^{\bullet}(M)\stackrel{def}{=}\bigoplus_{n=0}^{\infty}\T^n(M)$, i.e.
\begin{equation}\label{star}
\left( f\star g\right)_n= \sum_{m+l-2k=n} f_m\otimes_k g_l\ .
\end{equation}
Now equation \eqref{eq:defSmearedWickProd} may be interpreted as the definition of a representation map
\begin{equation}
\begin{split}
\pi : \T^{\bullet}(M) & \rightarrow \W(M)\\
f & \mapsto \phi^{\otimes \bullet}(f)\ .
\end{split}
\end{equation}
We denote by $\T^{\bullet}_K(M)$ the ideal in $\T^{\bullet}(M)$ which is generated by elements $\mathbf{S}K_1 f$ for $f\in\T^{\bullet}(M)$. Due to the validity of the field equation, we see that $\T^{\bullet}_K(M)$ is just the kernel of $\pi$. Therefore, we have a faithful representation $\pi_{\omega}$ of the quotient algebra $\T^{\bullet}(M)/\T^{\bullet}_K(M)$ by elements in $\W(M)$:
\begin{equation}
\pi_{\omega}: f+\T^{\bullet}_K(M) \mapsto \phi^{\otimes \bullet}(f)\ .
\end{equation}

It has been found that it is sometimes more convenient to work in the abstract algebra $\T^{\bullet}(M)$ than in its realization $\W(M)$, especially when treating problems of renormalization. This is referred to as the off-shell formalism, since the validity of the field equation is not enforced in $\T^{\bullet}(M)$. However, since the validity of the time slice axiom obviously depends on the field equation, in this paper we work in the on-shell formalism, i.e. in the isomorphic algebras $\W(M)$ and $\T^{\bullet}(M)/\T^{\bullet}_K(M)$.

The algebra $\T^{\bullet}(M)$ carries a natural topology inherited from the H\"or\-man\-der topologies on the spaces of distributions $\T^n(M)$ with compact support and the given restrictions on the wave front set \cite{DuistermaatHoermander2}. In this topology the space of test functions $\D(M^n)$ is sequentially dense in $\T^n(M)$, the product (\ref{star}) is sequentially continuous and the ideal $\T^{\bullet}_K(M)$ is sequentially closed. As a consequence, convergence of sequences in $\W(M)$ is well defined and $\A(M)$ is sequentially dense in $\W(M)$.

The key ingredient that we need for the first part of our proof is the following
\begin{prop}\label{prop:01}
Let $f\in\T^n(M)$. Let $N$ be a neighborhood of a Cauchy surface $\Sigma$ in the past of $\supp f$.
Then there exists a $g\in\T^n(M)$ with the following properties:
\renewcommand{\theenumi}{\roman{enumi}}
\renewcommand{\labelenumi}{(\theenumi)}
\begin{enumerate}
\item $g=f+i$ where  $i\in\T^{\bullet}_K(M)\cap\T^n(M)$ \label{Wick:item:01}
\item $\supp g\subset N$ \ . \label{Wick:item:02}
\end{enumerate}
\end{prop}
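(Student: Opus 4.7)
The plan is to exploit $K\Delta^A = \mathrm{id}$ for the advanced Green's function $\Delta^A$ of $K$, combined with a smooth cutoff $\chi$ that interpolates between two Cauchy surfaces inside $N$. The aim is to produce $g$ supported in $N^n$ and differing from $f$ only by terms of the form $K_i(\,\cdot\,)$, which are killed by the representation $\pi$ by virtue of the free field equation and hence lie in $\ker\pi = \T^\bullet_K(M)$, as identified just above the proposition.

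First I would choose two Cauchy surfaces $\Sigma_1 \subset J_-(\Sigma_2)$ with $\Sigma_1,\Sigma_2 \subset N$ and $\supp f \subset J_+(\Sigma_2) \times \cdots \times J_+(\Sigma_2)$ (possible because each projection of $\supp f$ is compact and lies in $I_+(\Sigma)$), arranging that the causal slab $J_+(\Sigma_1) \cap J_-(\Sigma_2)$ is contained in $N$. Pick $\chi \in C^\infty(M)$ with $\chi = 1$ on $J_+(\Sigma_2)$ and $\chi = 0$ on $J_-(\Sigma_1)$, so $\supp d\chi \subset N$. For $n=1$, using $\chi f = f$ and $K\Delta^A f = f$,
\[
f = \chi K \Delta^A f = K(\chi \Delta^A f) - [K,\chi]\Delta^A f,
\]
so $g := -[K,\chi]\Delta^A f$ satisfies $f - g = K(\chi \Delta^A f) \in \T^\bullet_K(M)$. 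Since $[K,\chi]$ is a first-order differential operator with coefficients supported in $\supp d\chi$, and $\Delta^A f$ is supported in $J_-(\supp f)$, we have $\supp g \subset \supp d\chi \cap J_-(\supp f) \subset N$, which is compact by global hyperbolicity.

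For $n > 1$ I would iterate the construction slot by slot. Define $T_i h := -[K_i,\chi(x_i)]\Delta^A_i h$; as long as $\chi(x_i) h = h$, which is preserved throughout the iteration because each $T_j$ only shrinks supports in variable $x_j$, one gets $h - T_i h = K_i(\chi(x_i)\Delta^A_i h)$. The operators $T_i$ act on disjoint slots, so they commute and transform equivariantly under permutations; applied to the symmetric $f$ they yield the symmetric distribution $g := T_n \circ \cdots \circ T_1 f$ with support in $N^n$. Telescoping gives
\[
f - g = \sum_{i=1}^n K_i\bigl(\chi(x_i)\Delta^A_i\, T_{i-1}\cdots T_1 f\bigr),
\]
a sum of $K_i$-terms each mapped to zero by $\pi$ via the free field equation; hence $f - g \in \ker\pi \cap \T^n(M) = \T^\bullet_K(M) \cap \T^n(M)$, as required.

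The main technical obstacle is to verify that the intermediate distributions $\Delta^A_i(\,\cdot\,)$, $\chi(x_i)\Delta^A_i(\,\cdot\,)$, and $T_i(\,\cdot\,)$ all remain inside $\T^n(M)$; in particular the wavefront-set condition $WF \cap \overline{V_-^n \cup V_+^n} = \emptyset$ is the delicate point, since $WF(\Delta^A)$ contains lightlike covectors along bicharacteristics of $K$ that could \emph{a priori} land inside $\overline{V_\pm^n}$. I would sidestep this by first proving the claim on the sequentially dense subspace $\D(M^n) \subset \T^n(M)$, where $\Delta^A_i f$ is automatically smooth (advanced solutions with smooth compactly supported sources are smooth) and all wavefront conditions are trivially satisfied; the support statement and the identity $f-g \in \T^\bullet_K(M)$ follow exactly as above. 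The extension to arbitrary $f \in \T^n(M)$ then comes from the sequential density of $\D(M^n)$ and the sequential continuity of $\Delta^A_i$, of multiplication by $\chi$, and of the differential operators $K_i$ on the relevant H\"ormander spaces, together with the sequential closedness of $\T^\bullet_K(M)$ recorded in the paragraph preceding the proposition.
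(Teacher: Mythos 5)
Your construction is, up to a small reformulation, the same as the paper's: because the condition $\chi_i h = h$ is preserved under the iteration, your $T_i h = -[K_i,\chi_i]\Delta^A_i h$ coincides with the paper's $\alpha_i h = h - K_i(\chi_i\Delta^{av}_i h)$, and the telescoping identity, the support analysis via Proposition~\ref{prop:02}, and the identification of $f-g$ as an element of $\T^\bullet_K(M)$ all go through as you describe.

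The gap is in the wavefront-set verification, which you correctly flag as the delicate point but then dispatch too quickly. You propose to establish it on the dense subspace $\D(M^n)$ and then invoke ``sequential continuity of $\Delta^A_i$\dots on the relevant H\"ormander spaces.'' But that continuity is not an off-the-shelf fact: what is actually needed is that $[K_i,\chi_i]\Delta^A_i$ maps $\T^n(M)$ into $\T^n(M)$ sequentially continuously, and proving that is equivalent to proving the very estimate $WF(\Delta^{av}_i f)\cap\overline{V_-^n\cup V_+^n}=\emptyset$ you are trying to sidestep. Nor does the estimate follow from the general microlocal calculus alone: the flow-out of $\mathrm{Char}\,K_i$ along null bicharacteristics stays on the boundary of $\overline{V_\pm^n}$, so an abstract propagation-of-singularities bound still allows $WF(\Delta^{av}_i f)$ to contain lightlike covectors. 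The paper excludes these by a \emph{global} argument: if $(X_0,\Xi_0)\in WF(\Delta^{av}_i f)\cap\overline{V_-^n\cup V_+^n}$, then the full bicharacteristic through it lies in $WF(\Delta^{av}_i f)$ (it avoids $WF(f)$ by the hypothesis on $f$), but its $i$-th component is a causal curve that must eventually leave the past-compact support of $\Delta^{av}_i f$, a contradiction. This support input is essential; without it, or a quantitative continuity estimate that encodes it, the density argument does not close, and the claim that $g\in\T^n(M)$ remains unproved.
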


\begin{proof}
There exists a compact $C\subset M$ such that $\supp f\subset C^n$. Let $\Sigma_1$ be a Cauchy surface in the past of $C$ and $\Sigma_0$ be another Cauchy surface in the past of $\Sigma_1$. Let $\chi\in\mathcal{C}^{\infty}(M)$ be such that
\begin{equation}
\begin{split}
\chi(x)=\left\{
\begin{array}{cl}
1 & \text{for } x\in J_+(\Sigma_1)\ , \\
0 & \text{for } x\in J_-(\Sigma_0)\ .
\end{array}
\right.
\end{split}
\end{equation}
Using the advanced fundamental solution $\Delta^{av}$ of the Klein-Gordon equation, we define an operator $\alpha_i:\T^n(M)\rightarrow\T^n(M)$ by
\begin{equation}
\begin{split}
(\alpha_i f)(x_1,\dots,x_n)=&f(x_1,\dots,x_n)-K_i\,\chi(x_i)\\
&\int dy \,\Delta^{av}(x_i,y)f(x_1,\dots,x_{i-1},y,x_{i+1},\dots,x_n) \ .
\end{split}
\end{equation}
We now claim that for $g = \alpha_1\cdots\alpha_n f$, we have $g\in\T^n(M)$, and that $g$ has the properties (\ref{Wick:item:01}) and (\ref{Wick:item:02}).

To prove that $g\in\T^n(M)$, we have to check the symmetry, support and wavefront set of $g$. The symmetry is obvious from the definition of $g$. That $g$ has compact support, follows from Proposition \ref{prop:02}, together with the compactness of $\supp f$, the support properties of $\Delta^{av}$, and the fact that $\supp \chi$ is past compact.

For the wave front property, it is sufficient to show that $WF\left(\Delta^{av}_i f\right)\cap\overline{V_-^n\cup V_+^n}=\emptyset$ for
\begin{equation}
\left(\Delta^{av}_i f\right)(x_1,\dots,x_n)=\int dy \Delta^{av}(x_i,y) f(x_1,\dots,x_{i-1},y,x_{i+1},\dots,x_n) \ ,
\end{equation}
since multiplication by a smooth function and application of a differential operator do not enlarge the wavefront set of a distribution.
$K_i$ is a properly supported differential operator on $M^n$ with real principal part
\begin{equation}
k_i(X,\Xi)=g_{x_i}(\xi_i,\xi_i)
\end{equation}
for $(X,\Xi)\in T^{\ast}(M^n)$,  $X=(x_1,\dots,x_n)$, $\Xi=(\xi_1,\dots,\xi_n)$ and the metric $g$ on $M$.
By definition, there holds $K_i\Delta^{av}_i f=f$, so we may apply H\"ormander's theorem on the propagation of singularities  ~\cite[Theorem 6.1.1]{DuistermaatHoermander2}. Suppose that $(X_0,\Xi_0)\in WF(\Delta^{av}_i f)\cap\overline{V_-^n\cup V_+^n}$. By H\"ormander's theorem, it follows that $(X_0,\Xi_0)\in \text{Char } K_i$, i.e. $k_i(X_0,\Xi_0)=0$. Let $\gamma$ be the uniquely determined curve in $T^{\ast}(M^n)$ such that $\gamma(0)=(X_0,\Xi_0)$ and $\gamma '(t)=H_k(\gamma(t))$ with the Hamiltonian vector field $H_k$ associated to the Hamiltonian function $k_i$.
$\gamma$ has the form
\begin{equation}
\gamma(t)=\left(\left(q_1(t),\dots,q_n(t) \right), \left(p_1(t),\dots,p_n(t)\right)\right)
\end{equation}
with $\left(q_j(t),p_j(t) \right)\in T^{\ast}(M) \quad \forall j\in\{1,\dots,n \}, t\in\RR$.
Following ~\cite[Proposition 4.2]{Radzikowski} $\left(q_i,p_i \right)$ is a null geodesic strip in $T^{\ast}(M)$.
Since $k(X,\Xi)$ depends only on $x_i$ and $\xi_i$, only the $i$-th component of $H_k(X,\Xi)$ is nonzero, i.e. $\left( q_j(t),p_j(t)\right)=\left( q_j(0),p_j(0)\right)$ $\forall t\in \RR$ for $i\neq j$. So we have $\gamma(t)\in\overline{V_-^n\cup V_+^n}$ $\forall t\in\RR$. Due to the assumption on $WF(f)$ we therefore have $\gamma(\RR)\cap WF(f)=\emptyset$. So from H\"ormander's theorem it follows that
\begin{equation}
\label{Wick:eq:19}
\gamma(\RR)\subset WF(\Delta^{av}_i f) \ .
\end{equation}
There exists a Cauchy surface $\Sigma$ with  $\supp \Delta^{av}_i f\subset \left(J_-(\Sigma)\right)^n$.
Since $q_i$ is a causal curve in $M$, $J_+(\Sigma)\cap q_i(\RR)\neq\emptyset$. So
\begin{equation}
\left(q_1(\RR),\dots,q_n(\RR)\right) \nsubseteq \supp \Delta^{av}_i f \ .
\end{equation}
But this contradicts \eqref{Wick:eq:19}, so there must hold $WF(\Delta^{av}_i f)\cap\overline{V_-^n\cup V_+^n}=\emptyset$.

Property (\ref{Wick:item:01}) is now obvious from the definition of the $\alpha_i$.
To prove (\ref{Wick:item:02}), we define a family of open subsets $\left( J_i^s \right)_{(i,s)\in\{1,\dots,n \}\times\{+,-\}}$ by
\begin{equation}
\label{eq:03}
\begin{split}
J_i^+=& \big\{ (x_1,\dots,x_n)\in M^n \big| x_i \in J_+(\Sigma_{t_1})\setminus \Sigma_{t_1}  \big\},\\
J_i^-=& \big\{ (x_1,\dots,x_n)\in M^n \big| x_i \in J_-(\Sigma_{t_0})\setminus\Sigma_{t_0} \big\} \ .
\end{split}
\end{equation}
This is an open cover of $M^n \setminus \left( J_-(\Sigma_{t_1})\cap J_+(\Sigma_{t_0})\right)^n$ so there exists a partition of unity  $\left( \varepsilon_i^s \right)_{(i,s)\in\{1,\dots,n \}\times\{+,-\}}$, such that
\begin{equation}
\label{eq:04}
\begin{split}
 &\varepsilon_i^s\in \mathcal{C}^{\infty}\left(M^n\setminus \left( J_-(\Sigma_{t_1})\cap J_+(\Sigma_{t_0})\right)^n\right),\\
 &\sum_{i,s} \varepsilon_i^s(x)=1 \quad\forall x\in M^n \setminus \left( J_-(\Sigma_{t_1})\cap J_+(\Sigma_{t_0})\right)^n,\\
 &\supp \varepsilon_i^s \subset J_i^s \ .
\end{split}
\end{equation}
For $\Phi\in\mathcal{C}^{\infty}(M^n)$ with $\supp \Phi\cap \left( J_-(\Sigma_{t_1})\cap J_+(\Sigma_{t_0})\right)^n=\emptyset$, we have
\begin{equation}
\label{eq:5}
 (\alpha_1\dots\alpha_n f_n)(\Phi)
 =\sum_i (\alpha_1\dots\alpha_n f_n)(\varepsilon_i^+\Phi) + \sum_i (\alpha_1\dots\alpha_n f_n)(\varepsilon_i^-\Phi) \ .
\end{equation}
Let $i\in\{1,\dots,n \}$, then
\begin{equation}
\begin{split}
&(\alpha_1\dots\alpha_n f_n)(\varepsilon_i^+\Phi)\\
=&(\alpha_1\dots\widehat{\alpha_i}\dots\alpha_n f_n)(\varepsilon_i^+\Phi) -
\left(K_i\chi_i(\Delta^{av}_i\alpha_1\dots\widehat{\alpha_i}\dots\alpha_n f_n)\right)(\varepsilon_i^+\Phi)\\
=&(\alpha_1\dots\widehat{\alpha_i}\dots\alpha_n f_n)(\varepsilon_i^+\Phi) -
\left(K_i(\Delta^{av}_i\alpha_1\dots\widehat{\alpha_i}\dots\alpha_n f_n)\right)(\varepsilon_i^+\Phi)\\
=&0 \ ,
\end{split}
\end{equation}
where we used the notation $\chi_i(x_1,\dots,x_n)=\chi(x_i)$ and the fact that $\chi_i\varepsilon_i^+=\varepsilon_i^+$.
Now we consider
\begin{equation}
\begin{split}
&(\alpha_1\dots\alpha_n f_n)(\varepsilon_i^-\Phi)\\
=&(\alpha_1\dots\widehat{\alpha_i}\dots\alpha_n f_n)(\varepsilon_i^-\Phi) -
\left(K_i\Delta^{av}_i\alpha_1\dots\widehat{\alpha_i}\dots\alpha_n f_n)\right)(\chi_i\varepsilon_i^-\Phi) \ .
\end{split}
\end{equation}
The terms on the right hand side vanish individually. The first one vanishes, because
$\supp \varepsilon_i^-\Phi\subset M^{i-1}\times J_-(\Sigma_{t_0})\times M^{n-i}$, but
$\supp \alpha_1\dots\widehat{\alpha_i}\dots\alpha_n f_n \subset M^{i-1}\times X \times M^{n-i}$
and $X\cap J_-(\Sigma_{t_0}) = \emptyset$. The second term vanishes, since $\chi_i\varepsilon_i^-=0$. So all terms in \eqref{eq:5} vanish.
\end{proof}
As a consequence, we get the validity of the time slice axiom for the algebra $\mathcal{W}(M)$ of Wick polynomials:
\begin{thm}
Let $M$ be a globally hyperbolic spacetime and let $N$ be a neighborhood of a Cauchy surface $\Sigma$ of $M$. Then, 
\begin{equation}
\mathcal{W}(N)=\mathcal{W}(M) \ .
\end{equation} 
\end{thm}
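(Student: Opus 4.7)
The strategy is to reduce the theorem to Proposition \ref{prop:01}. Since every element of $\W(M)$ is a finite sum of smeared Wick polynomials $\pi(f)=\phi^{\otimes n}(f)$ with $f\in\T^n(M)$, and since $\W(N)\subset\W(M)$ is immediate from $N\subset M$, it suffices to produce, for each such $f$, some $g\in\T^n(M)$ with $\supp g\subset N^n$ and $f-g\in\T^{\bullet}_K(M)$. Proposition \ref{prop:01} does exactly this, but only when the Cauchy surface $\Sigma$ lies in the past of $\supp f$. For a general $f$ the support may straddle $\Sigma$, and in a product $M^n$ different coordinates may need to be pushed in opposite directions, so a direct application is not sufficient.

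My plan is to split $f$ by a partition of unity on $M^n$ so that every piece has, in each coordinate, its support on a single side of $N$, and then push each piece into $N^n$ coordinate by coordinate. First, I would choose two Cauchy surfaces $\Sigma^-,\Sigma^+\subset N$ with $\Sigma^+$ strictly to the future of $\Sigma^-$, so that $U_+=I_+(\Sigma^-)$ and $U_-=I_-(\Sigma^+)$ form an open cover of $M$. The $2^n$ products $\prod_i U_{\sigma(i)}$ indexed by $\sigma\in\{+,-\}^n$ then cover $M^n$; I pick a subordinate smooth partition of unity $\{\rho_\sigma\}$ and write $f=\sum_\sigma \rho_\sigma f$. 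Each $\rho_\sigma f$ is again in $\T^n(M)$, since multiplication by a smooth function neither enlarges the wavefront set nor destroys compactness of support, and its support lies in $\prod_i U_{\sigma(i)}$. For each $\sigma$ I push the support into $N^n$ coordinate by coordinate: on coordinates $i$ with $\sigma(i)=+$ I apply the operator $\alpha_i$ from the proof of Proposition \ref{prop:01}, built from $\Delta^{av}$, to bring the $i$-th slot down into the slab inside $N$; on coordinates with $\sigma(i)=-$ I apply a dual operator $\beta_i$, obtained by swapping past and future in the construction of $\alpha_i$ (i.e.\ built from the retarded fundamental solution and a cutoff $1$ to the past of $\Sigma^-$ and $0$ to the future of $\Sigma^+$), to push up. Since operators acting on disjoint coordinates commute, the order is irrelevant, and the resulting $g_\sigma\in\T^n(M)$ satisfies $\supp g_\sigma\subset N^n$ and $g_\sigma-\rho_\sigma f\in\T^{\bullet}_K(M)\cap\T^n(M)$. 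Summing over $\sigma$ yields the desired $g=\sum_\sigma g_\sigma$.

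The main technical point I expect is establishing the retarded analogue $\beta_i$ of $\alpha_i$: the propagation-of-singularities argument and the partition-of-unity support analysis of Proposition \ref{prop:01} must be rerun with past and future swapped. Both go through by symmetry, since H\"ormander's theorem is invariant under time reversal and the geometric arguments only use the existence of Cauchy surfaces bracketing the relevant slab. Once $\beta_i$ is in hand and verified to preserve the wavefront condition and compactness of support by exactly the argument given for $\alpha_i$, the mixed coordinate-wise construction above is straightforward and completes the reduction.
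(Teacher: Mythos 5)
You correctly spot that Proposition \ref{prop:01}, as stated, only moves supports into a neighbourhood of a Cauchy surface lying to the \emph{past} of $\supp f$, while the paper's proof of the theorem simply applies the proposition to each $f_n$ without addressing the case where $\Sigma$ meets or lies to the future of $\supp f_n$. Your partition-of-unity construction on $M^n$ fills this gap and is sound: the $2^n$ products $\prod_i U_{\sigma(i)}$ do cover $M^n$, each piece $\rho_\sigma f$ remains in $\T^n(M)$, and the mixed coordinate-wise push --- $\alpha_i$ on future-side slots, a retarded analogue $\beta_i$ on past-side slots --- moves the support into $N^n$ modulo $\ker\pi$; the $\beta_i$ exist and behave exactly like the $\alpha_i$ because, as you observe, H\"ormander's propagation of singularities, the support analysis, and Proposition \ref{prop:02} are all time-reversal symmetric. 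One small point to tidy: the $g_\sigma$ are not individually symmetric, so to land in $\T^n(M)$ you should either choose an equivariant product-type partition $\rho_\sigma(x_1,\dots,x_n)=\prod_i u_{\sigma(i)}(x_i)$ (then $g=\sum_\sigma g_\sigma$ is automatically symmetric) or apply $\mathbf{S}$ at the end, noting that symmetrization preserves compact support, the wavefront condition, and membership of $g-f$ in the ideal. Finally, you can avoid the partition of unity altogether by composing two uniform pushes: apply Proposition \ref{prop:01} with $\Delta^{av}$, choosing the auxiliary Cauchy surface to the past of both $\supp f_n$ and $\Sigma$, to move $\supp f_n$ into a thin slab strictly in the past of $\Sigma$, and then apply the retarded analogue to bring it up into $N$; the common ingredient --- the retarded twin of Proposition \ref{prop:01}, which the paper leaves tacit --- is exactly what your argument supplies.
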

\begin{proof}
Let $\mathcal{W}(M)\ni F(\varphi)=\sum_{n=0}^{k} \int:\varphi(x_1)\cdots\varphi(x_n):f_n(x_1,\dots,x_n)dx_1\dots dx_n$. Using Proposition \ref{prop:01} for each $f_n$, we can find $g_n\in\T^n(N)$, $n\in\{1,\dots,n\}$ with $\supp g_n \subset N$. So the $g_n$ define an element $G(\varphi)=\sum_{n=0}^{k} \int:\varphi(x_1)\cdots\varphi(x_n):g_n(x_1,\dots,x_n)dx_1\dots dx_n$ of $\mathcal{W}(N)$. But because of property (\ref{Wick:item:01}) in Proposition \ref{prop:01} and the validity of the wave equation in $\W(M)$, we have $F(\varphi)=G(\varphi)$.
\end{proof}

\section{The time slice axiom for the interacting case}
\label{sect:interacting}
In causal perturbation theory, the interacting fields are constructed in terms of time ordered products of polynomials of fields. Most easily one describes the time ordered products in terms of their generating functions, the so-called local S-matrices. They are formal power series with coefficients in the algebra of Wick polynomials, and they are functionals $S(g)$ of test functions $g\in\mathcal{D}(M,V)$ where $V$ denotes the finite dimensional space of possible interaction terms, including the free field itself. The crucial property of the S-matrix is the causal factorization property
\begin{equation}\label{causal}
S(f+g+h)=S(f+g)S(g)^{-1}S(g+h) 
\end{equation}
if the supports of $f$ and $h$ are causally separated in the sense that there is a Cauchy surface such that $\supp f$ is in the future and $\supp h$ is in the past of the surface.
Together with the normalization condition $S(0)=1$ this is the only property of perturbation theory which we need in our proof. The proof is therefore also valid beyond perturbation theory once a solution to the causal factorization property has been found. 

The algebra $\mathfrak{A}_0(\mathcal{O})$ of the free field associated to the region $\mathcal{O}$ is the sequentially closed unital $*$-algebra generated by the elements $S(g)$ where $\supp g\subset\mathcal{O}$.
We assume that this net of algebras satisfies the time slice axiom. (In perturbation theory this follows from the fact that this net of algebras is identical to net of algebras discussed in the previous section.) 

Interacting fields for localized interactions parametrized by a spacetime dependent coupling constant $g\in\mathcal{D}(M,V)$ can be defined in terms of relative S-matrices
$S_g(f):=S(g)^{-1}S(g+f)$. It is a nice feature of the causal factorization property that the relative S-matrices $S_g$ satisfy also the functional equation (\ref{causal}). Therefore another interaction may be added. In particular the original interaction may be compensated so that the free field can be expressed in terms of the interacting fields. This is the starting idea for the extension of the time slice property to the interacting case.

But there is an obstacle, namely we want to describe interactions which do not vanish outside of a compact region. They can no longer be described by test functions $g$
with compact support. Instead we have to admit smooth functions $g\in\mathcal{E}(M,V)$.
Fortunately, due to the functional equation (\ref{causal}), it turns out that the structure
of the algebras $\mathfrak{A}_g(\mathcal{O})$, which are taken to be the sequentially closed unital $*$-algebras generated by the relative S-matrices $S_g(f)$ with $\supp f\subset\mathcal{O}$, is independent of the behaviour of $g$ outside of $\mathcal{O}$. This allows to perform the limit to interactions with noncompact support in a purely algebraic way (``algebraic adiabatic limit'' \cite{BF}).

Actually, again due to (\ref{causal}) the relative S-matrices $S_g$ can be defined if the support   of $g$ is past compact. In this case, we may use Proposition \ref{prop:02} to conclude that the past of every compact set $K\subset M$ intersects $\supp g$ only within a relatively compact region. Consequently, we can find a test function $b\in\mathcal{D}(M,V)$ which coincides with $g$ in $J_-(K)$.
Then for every $f$ with support in the interior of $K$ we set
\begin{equation}
S_g(f)=S_b(f) \ .
\end{equation}
The right hand side does not depend on the choice of $b$, for let $\tilde{b}$ also satisfy the condition on $b$, then for $c=\tilde{b} - b$, $\supp c$ does not intersect the past of $K$. So we have
\begin{equation}\label{eq:independence}
S_{\tilde{b}}(f)=S(b+c)^{-1}S(b+c)S(b)^{-1}S(b+f)=S_b(f) \ .
\end{equation}

We now want to prove the time slice axiom. Let $\mathcal{O}$ be a relatively compact region and let $\Sigma$ be a Cauchy surface. We choose a second Cauchy surface $\Sigma_1$ such that $\Sigma$ and $\mathcal{O}$ lie in the interior of the future of $\Sigma_1$. Now let $g$ be a smooth function with support in the interior of the future of $\Sigma_1$. We want to prove that
$\mathfrak{A}_g(\mathcal{O})\subset\mathfrak{A}_g(N)$ for every open neighbourhood $N$ of $\Sigma$ . We already know by construction that $\mathfrak{A}_g(\mathcal{O})\subset\mathfrak{A}_0(M)$ and that by the time slice property for the free field $\mathfrak{A}_0(M)=\mathfrak{A}_0(N)$ for every open neighbourhood of $\Sigma$. Let $N'\subset N$ be an open neighbourhood of $\Sigma$ whose past boundary is again a Cauchy surface (see figure \ref{figure}).   We choose a function $g'$ with $\supp g'\subset N$ which coincides with $g$ on $N'$. We then construct the relative S-matrices
$$S_{g,g'}(f)=S_g(-b')^{-1}S_g(-b'+f)$$ 
with a test function $b'$ with support in $N$ and which coincides with $g'$ on $J_-(K)$, where $\supp f$ is contained in the interior of $K$ and $K\subset N'$.   Again, the right hand side is independent of the choice of $b'$. Inserting the definition of $S_g$ we obtain
$$S_{g,g'}(f)=S(b-b')^{-1}S(b-b'+f) \ $$
where $b$ coincides with $g$ on $J_-(L)$ and where $L$ is a compact region containing $\supp b'$ in its interior. Now we may split $b-b'=b_+ + b_-$ such that $\supp b_+$ does not intersect the past and $\supp b_-$ not the future of $\supp f$, hence the second factor factorizes, and we obtain
$$S_{g,g'}(f)=S(b_-)^{-1}S(f)S(b_-) \ .$$
\begin{figure}\centering
\begin{tikzpicture}[scale=1,>=stealth]

\fill [black!20!white] (-6,1) -- (6,1) -- (6,-1) -- (-6,-1) -- cycle; 
\fill [black!30!white] (-6,0.6) -- (6,0.6) -- (6,-0.6) -- (-6,-0.6) -- cycle; 

\fill [black!30!white] (-6,-2) -- (6,-2) -- (6,-2.5) -- (-6,-2.5) -- cycle; 
\draw (0,-2.25)node{$S$};

\draw [gray] (-6,1) -- (6,1); 
\draw [gray] (-6,0.6) -- (6,0.6); 
\draw (-6,0) -- (6,0); \draw (-2,0) node[fill=black!30!white]{$\Sigma$};
\draw [gray] (-6,-0.6) -- (6,-0.6); 
\draw [gray] (-6,-1) -- (6,-1); 
\draw [gray] (-6,-2) -- (6,-2); \draw (6.3,-2)node{$\Sigma_1$};
\draw [gray] (-6,-2.5) -- (6,-2.5); \draw (6.3,-2.5)node{$\Sigma_2$};

\draw (-.75,0) .. controls (-0.5,0.45) and (0.5,0.45) .. (.75,0); 
\draw (-.75,0) .. controls (-0.5,-0.45) and (0.5,-0.45) .. (.75,0); 
\draw (0,0) node[fill=black!30!white]{$K$};

\draw (-0.75,0) -- (-2.193,-2.5);
\draw (-2.193,-2.5) -- (-4.502,1.5);

\draw (0.75,0) -- (2.193,-2.5);
\draw (2.193,-2.5) -- (4.502,1.5);

\draw [semithick] (-4,0) .. controls (-4,.8) and (-3.5,0.8) .. (0,0.8) ; 
\draw [semithick] (-4,0) .. controls (-4,-1) and (-4,-1) .. (-3.5,-1) ; 
\draw (1.5,0.6) node {$\supp b'$};
\draw [semithick] (4,0) .. controls (4,.8) and (3.5,0.8) .. (0,0.8) ; 
\draw [semithick] (4,0) .. controls (4,-1) and (4,-1) .. (3.5,-1) ; 
\draw [semithick] (3.5,-1) -- (-3.5,-1); 

\draw [semithick] (-5.7,-1) .. controls (-5.7,1.5) and (-5,1.5) .. (0,1.5); 
\draw [semithick] (-5.7,-1) .. controls (-5.7,-1.5) and (-5.5,-2) .. (-5.3,-2); 
\draw [semithick] (5.7,-1) .. controls (5.7,1.5) and (5,1.5) .. (0,1.5); 
\draw [semithick] (5.7,-1) .. controls (5.7,-1.5) and (5.5,-2) .. (5.3,-2); 
\draw (1.5,1.3) node {$\supp b$};
\draw [semithick] (5.3,-2) -- (-5.3,-2); 

\draw (-6.3,0) .. controls (-6.1,.1) and (-6.4,.6) .. (-6.1,.6); 
\draw (-6.3,0) .. controls (-6.1,-.1) and (-6.4,-.6) .. (-6.1,-.6); 
\draw (-6.6,0) node {$N'$};

\draw (6.3,0) .. controls (6.1,.1) and (6.4,1) .. (6.1,1); 
\draw (6.3,0) .. controls (6.1,-.1) and (6.4,-1) .. (6.1,-1); 
\draw (6.6,0) node {$N$};
\end{tikzpicture}
\caption{Sketch of the geometrical configuration}
\label{figure}
\end{figure}
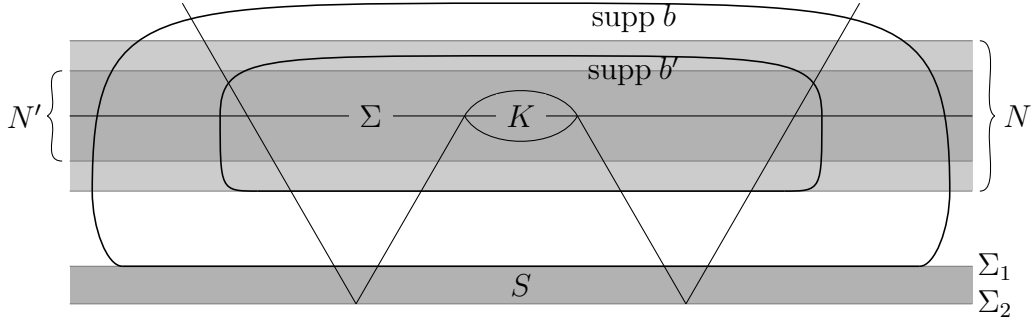

We draw two conclusions: First we see that for every $A\in\mathfrak{A}_0(\mathcal{O}_1)$ with $\mathcal{O}_1$ relatively compact and with closure $K\subset N'$ we have
\begin{equation}\label{a}
S(b_-)^{-1}A S(b_-)\in\mathfrak{A}_g(N)
\end{equation}
provided $b$ and $b'$ satisfy the conditions above. Moreover, due to the validity of the time slice axiom for the free theory, we have
\begin{equation}\label{b}
S(b_-)^{-1}A S(b_-)\in\mathfrak{A}_0(\tilde{\mathcal{O}})
\end{equation}
with  $\tilde{\mathcal{O}}=N''\cap J_+(J_-(K)\cap N)$ where $N''$ is an open neighbourhood of $\Sigma$ whose closure is contained in $N'$. 
We see that the map
\begin{equation}
S(f)\mapsto S_{g,g'}(f)\ ,\ \supp f\subset N'
\end{equation}
extends to an endomorphism $\alpha$ of $\mathfrak{A}_0(M)$ whose image is contained in $\mathfrak{A}_g(N)$. Moreover, for each relatively compact region $\mathcal{O}_2$ there is an invertible element $U\in \mathfrak{A}_0(M)$ such that
$\alpha(A)=UAU^{-1}$ for all $A\in\mathfrak{A}_0(\mathcal{O}_2)$.
In particular, $\alpha$ is injective. 
It remains to show that $\alpha$ is surjective. Since $g-g'$ vanishes within $N'$, we may decompose it into $g-g'=g_++g_-$ where $g_+$ has support in the future and $g_-$ in the past of
$N'$. For $f$ with $\supp f\subset N'$ we have
$\alpha(S(f))=S_{g_-}(f)$. It amounts to the statement that an interaction in the past causes an endomorphism of the algebra of observables which can be approximated by inner automorphisms $\alpha_b$ implemented by invertible elements $S(b_-)$ where $b_-$ coincides with $g_-$ in the past of a sufficiently large compact subregion of $\Sigma$. We may now choose $b_-$ such that it coincides with $g_-$ also in the future of $J_-(K)\cap S$, where $S=J_-(\Sigma_1)\cap J_+(\Sigma_2)$, and $\Sigma_2$ is a Cauchy surface in the past of $\Sigma_1$. By the time slice axiom for the algebra of the free field, we have $\mathfrak{A}_0(K)\subset\mathfrak{A}_0(J_-(K)\cap S)$. But $\mathfrak{A}_0(J_-(K)\cap S)$ is generated by elements $S(h)$ where $h\in\mathcal{D}(M,V)$ with $\supp h\subset J_-(K)\cap S$. For each such $h$, $\supp h$ is in the past of $\Sigma_1$ and $\supp b_-$ in the future, so
\begin{equation*}
\alpha_{b_-}^{-1}(S(h))  = S(b_-)S(h)S(b_-)^{-1} = S(b_-+h)S(b_-)^{-1}
\end{equation*}
does not depend on the choice of $b_-$ by an argument analogous to the one used in \eqref{eq:independence}.
So the inverse of $\alpha$ exists on $\mathfrak{A}_0(J_-(K)\cap S)\supset\mathfrak{A}_0(K)$ for all compact regions $K\subset N'$ and hence everywhere.

\section{Appendix}
The following simple proposition is used in two separate arguments of our proof, in sections \ref{sect:wick} and \ref{sect:interacting}, respectively.
\begin{prop}
\label{prop:02}
Let M be a globally hyperbolic spacetime, $P\subset M$ past compact and $K\subset M$ compact. Then $J_-(K)\cap P$ is contained in a compact set.
\end{prop}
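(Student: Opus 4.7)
The plan is to reduce the proposition to the pointwise definition of past compactness by covering the compact set $K$ with the chronological pasts of finitely many auxiliary points. The key observation is that $J_-(x)\cap P$ being contained in a compact set is only assumed for single points $x$, so the whole argument will be to bootstrap this to a compact source $K$ via an ordinary open cover/finite subcover step.

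First, I would check that the family $\{I_-(x)\}_{x\in M}$ is an open cover of $M$: each $I_-(x)$ is open (chronological pasts are always open in any spacetime), and every point $p\in M$ lies in some $I_-(x)$ because one can travel a short positive parameter forward along a future-directed timelike curve starting at $p$ to produce $x$. By compactness of $K$, extract a finite subcover: there exist $x_1,\dots,x_n\in M$ with
\[
K\subset\bigcup_{i=1}^n I_-(x_i)\subset\bigcup_{i=1}^n J_-(x_i).
\]

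Second, using the transitivity of the causal past relation, $J_-(K)=\bigcup_{p\in K}J_-(p)\subset\bigcup_{i=1}^n J_-(x_i)$. Intersecting with $P$ and invoking past compactness at each $x_i$, there exist compact sets $C_i\subset M$ with $J_-(x_i)\cap P\subset C_i$, so that
\[
J_-(K)\cap P\subset\bigcup_{i=1}^n\bigl(J_-(x_i)\cap P\bigr)\subset\bigcup_{i=1}^n C_i,
\]
and the right hand side is a finite union of compact sets, hence compact.

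I do not expect a serious obstacle here: the argument is essentially the standard trick of turning a pointwise hypothesis into a statement about compact sources by a finite subcover. The only step warranting a brief justification is that $\{I_-(x)\}_{x\in M}$ covers $M$, which rests only on the ability to extend a point forward along a future-directed timelike curve, a property automatic on the globally hyperbolic spacetime $M$.
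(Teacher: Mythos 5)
Your proof is correct, and the core mechanism is the same as the paper's: cover the compact set $K$ by finitely many chronological pasts $I_-(x_i)$, and then exploit past compactness at each of the finitely many points $x_i$. Where you diverge from the paper is in how you pass from ``$K$ is covered'' to ``$J_-(K)$ is covered.'' You observe directly that $p\in I_-(x_i)\subset J_-(x_i)$ together with transitivity of the causal relation gives $J_-(p)\subset J_-(x_i)$, hence $J_-(K)=\bigcup_{p\in K}J_-(p)\subset\bigcup_i J_-(x_i)$. The paper instead first proves $I_-(K)\subset\bigcup_i I_-(y_i)$ by an explicit curve-extension argument, then passes to $J_-(K)$ via the general inclusion $J_\pm(S)\subset\overline{I_\pm(S)}$ together with the fact that $J_-(z)$ is closed for globally hyperbolic $M$. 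Your route is shorter and uses only the elementary transitivity of $J_-$, avoiding both the closure step and the appeal to closedness of causal pasts; the paper also anchors the cover points on a Cauchy surface in the future of $K$, which is not needed for this argument (and indeed you choose them freely in $M$). Both proofs are valid; yours requires slightly fewer geometric inputs.
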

\begin{proof}
Choose a Cauchy surface $\Sigma$ in the future of $K$. The family $\left(I_-(y)\right)_{y\in\Sigma}$ is an open cover of $K$, since $I_{\pm}(z)$ is open for any $z\in M$. Since $K$ is compact, there exists a finite subset $Y=\{y_1,\dots,y_n\}\subset\Sigma$ such that $\left(I_-(y)\right)_{y\in Y}$ is an open cover of $K$. 
Let $x$ be any point in $I_-(K)$. 
Then, $x\in I_-(y_i)$ for some $i\in \{1,\dots,n\}$. (This is true because there exists
a timelike, future directed curve $\gamma$, from $x$ to some $k\in K$. Since $k\in I_-(y_i)$ for some $i$, $\gamma$ can be extended timelike to $y_i$.)
So we have that
\begin{equation}
I_-(K)\subset \bigcup_{y\in Y}I_-(y) \subset \bigcup_{y\in Y}J_-(y)\ .
\end{equation}
Since $J_{\pm}(S)\subset \overline{I_{\pm}(S)}$ for any $S\subset M$, and since $J_-(z)$ is closed for any $z\in M$, 
\begin{equation}
J_-(K)\subset \overline{I_-(K)}\subset\overline{\bigcup_{y\in Y}J_-(y)} 
= \bigcup_{y\in Y}\overline{J_-(y)}
= \bigcup_{y\in Y}J_-(y)\ .
\end{equation}
Intersecting with $P$, we get
\begin{equation}
J_-(K)\cap P \ \subset\  \left(\bigcup_{y\in Y}J_-(y)\right)\cap P
\ \subset\  \bigcup_{y\in Y}\left(J_-(y)\cap P\right)\ , 
\end{equation}
where the right hand side is by assumption contained in a finite union of compact sets.
\end{proof}

\end{document}